\documentclass{article}
\usepackage[utf8]{inputenc}
\usepackage[T1]{fontenc}
\usepackage{amsmath,amssymb,amsthm}
\usepackage{mathtools}
\usepackage{booktabs}
\usepackage{float}
\usepackage{hyperref}
\usepackage[margin=2.5cm]{geometry}
\usepackage{natbib}% Citation support using natbib.sty
\bibpunct[, ]{(}{)}{;}{a}{,}{,}% Citation support using natbib.sty

\usepackage{lineno}
\usepackage{enumitem}

\usepackage{tcolorbox}
\usepackage{graphicx}
\usepackage{xcolor}

\newtheorem{theorem}{Theorem}

\newtheorem{remark}{Remark}

\begin{document}

\title{AI Detectors Fail Diverse Student Populations: A Mathematical Framing of Structural Detection Limits}

\author{N.A. Garland \\ \small Queensland Quantum and Advanced Technologies Research Institute, Griffith University, QLD, Australia \\ \small School of Environment and Science, Griffith University, QLD, Australia }
% \affil{home}
% \affiliation{Queensland Quantum and Advanced Technologies Research Institute, Griffith University, Nathan 4111, QLD, Australia}
% \affiliation{School of Environment and Science, Griffith University, Nathan 4111, QLD, Australia}

% \author{
% \name{Author\textsuperscript{$\dagger$}\thanks{$\dagger$ Email: anon@ymo.us}}
% }

\date{\today}

\maketitle
\begin{abstract}
Student experiences and empirical studies report that ``black box'' AI text detectors produce high false positive rates with disproportionate errors against certain student populations, yet typically theoretical analyses model detection as a test between two known distributions for human and AI prose. This framing omits the structural feature of university assessment whereby an assessor generally does not know the individual student's writing distribution, making the null hypothesis composite. Standard application of the variational characterisation of total variation distance to this composite null shows trade-off bounds that any text-only, one-shot detector with useful power must produce false accusations at a rate governed by the distributional overlap between student writing and AI output. This is a constraint arising from population diversity that is logically independent of AI model quality and cannot be overcome by better detector engineering or technology. A subgroup mixture bound connects these quantities to observable demographic groups, providing a theoretical basis for the disparate impact patterns documented empirically. We propose suggestions to improve policy and practice, and argue that detection scores should not serve as sole evidence in misconduct proceedings.
\end{abstract}

% \begin{keywords}
% academic integrity; AI detection; composite hypothesis testing; higher education; 
% \end{keywords}

% \newpage
%=============================================================================
\section{Introduction}
\label{sec:intro}
%=============================================================================

The deployment of AI text detection tools in university assessment has become widespread, driven by legitimate concerns about academic integrity in the era of large language models (LLMs) (see~\citet{Tight2024,LuoJess2024}). However, a growing body of empirical work has documented serious problems with these tools. For example, \citet{weberwulff2023} tested 14 detection tools and found that none achieved 80\% accuracy, with all tools producing false positives (human-written text flagged as AI-generated) and false negatives (AI-generated text classified as human-written). \citet{liang2023} demonstrated that detectors disproportionately flag non-native English speakers, raising equity concerns. \citet{elkhatat2023} found inconsistencies across detection tools when applied to human-written control samples. Most recently, \citet{hadra2026} evaluated detector accuracy for institutions serving English as a Foreign Language learners, concluding that current tools are ``unsuitable as authoritative arbiters of authorship'' and that institutions should approach detection ``with caution and nuance.'' The broader implications of LLMs for assessment validity and fairness have been recognised by the educational measurement community~\cite{Hao2024}, but a formal statistical framework connecting population diversity to structural detection limits has not yet been developed.

These empirical findings paint a consistent picture: current AI text detectors are unreliable, particularly for diverse student populations. But the existing literature largely treats each study's findings as properties of \emph{particular} detectors that might be improved by better engineering. In this paper, we ask a different question: are there structural reasons why any \emph{text-only, one-shot} detector must face a trade-off between detection power and false accusations, independent of detector quality?

We show that the answer is yes, and that the source of this structural barrier is \emph{population diversity}---the natural variation in writing style, ability, and linguistic background across a student body. This is distinct from, and complementary to, the widely-discussed problem of AI text quality converging toward human text \citep{sadasivan2023,chakraborty2024}.

Existing theoretical analyses \citep{sadasivan2023,chakraborty2024} model AI detection as a binary test between two \emph{known} distributions: $p_H$ (``the'' human distribution) versus $p_M$ (``the'' AI distribution). They show that as these distributions converge, detection approaches random guessing.

But in a university, there is no single ``human distribution.'' Each student writes differently, and the assessor typically does not know an individual student's writing characteristics when evaluating a submission. The null hypothesis is not ``this was drawn from $p_H$'' but rather:

\begin{center}
\emph{``This document was written by student $i$, whose writing distribution $p_{\theta_i}$ is unknown.''}

\end{center}

In the language of statistical hypothesis testing \citep{lehmann2005}, this is a \emph{composite} null hypothesis - the null distribution is not a single known entity but a family $\{p_\theta : \theta \in \Theta\}$ indexed by student characteristics. This seemingly simple reframing has concrete, quantifiable consequences for what any detector can achieve.

In this work we present our arguments towards modelling a ``black-box'' university one-shot AI detection as a composite hypothesis test with a composite null, making explicit a structural feature that prior theoretical work overlooks. We identify population diversity as a source of detection difficulty that is logically independent of AI quality, and show precisely what policy levers can address each mode (Section~\ref{sec:twosources}). We show a bound that helps to connect the theoretical overlap quantities to observable, real student subgroups, providing a bridge to the bias and auditing evidence documented by \citet{liang2023}, \citet{weberwulff2023}, \citet{elkhatat2023}, and \citet{hadra2026}. We conclude by proposing a concrete stratified FPR auditing procedure that institutions can implement with existing data (Section~\ref{sec:audit}).

The proof techniques are standard applications of the total variation (TV) distance variational inequality. We make no claim to have discovered new results in hypothesis testing theory. This present contribution is the applied formalisation, the conceptual separation of failure modes, and the connection to institutional practice enabled by relatively elementary mathematical exercises with fundamental statistics definitions.

%=============================================================================
\section{Background}
\label{sec:prior}
%=============================================================================

We briefly review the two main theoretical results in the literature, highlighting the structural assumption that each makes. The key mathematical tool is the \emph{total variation} (TV) distance between two probability distributions $P$ and $Q$, defined as 
\begin{align}
    \mathrm{TV}(P,Q) = \sup_A |P(A) - Q(A)| = \frac{1}{2}\int_{\mathcal{X}} |p(x) - q(x)|\,d\mu(x), 
\end{align}
where $\sup$ is the supremum, and $A$ is any event possible in the set of all possible text documents (the sample space), $\mathcal{X}$, with reference measure $\mu(x)$.

For any test function, which we will shortly formalise as our detector, $\phi: \mathcal{X} \to [0,1]$, the variational inequality gives
\begin{equation}
\label{eq:variational}
|\mathbb{E}_P[\phi] - \mathbb{E}_Q[\phi]| \;\leq\; \,\mathrm{TV}(P,Q),
\end{equation}
where $\mathbb{E}_P[\phi]=\int_{\mathcal{X}}\phi(x)p(x)d\mu(x)$ is the expectation of $\phi$ over a given distribution $P$. This single inequality is the only mathematical tool used in our analysis. 

In terms of distributional convergence, \citet{sadasivan2023} applies~\eqref{eq:variational} to a known human distribution $P_H$ and AI distribution $P_M$, and show any detector's power exceeds its false positive rate by at most $\,\mathrm{TV}(P_H, P_M)$. As AI models improve and $\mathrm{TV}(P_M, P_H) \to 0$, all detection metrics degrade. A key assumption was that both $P_H$ and $P_M$ are single, known distributions.

For multi-sample detection, the work of~\citep{chakraborty2024} demonstrates reliable detection requires $O(1/\delta^2)$ independent samples when $\mathrm{TV}(p_M, p_H) = \delta > 0$. In the one-shot setting ($n=1$), this provides no useful guarantee when $\delta$ is small. Again, their study assumed both distributions were known with multiple independent and identically distributed samples available.

The structural gap is that neither result addresses the case where the null distribution varies across the population and is unknown for each individual. This is precisely the situation in university assessment.

The one-shot AI detection problem is also structurally related to authorship verification without enrolment in computational stylometry \citep{koppel2004,stamatatos2009,Dawson2019}. In that setting, one must determine whether a questioned document was written by a claimed author, given no (or very limited) reference material from that author.

Both problems have a composite null in that the writer's distribution is unknown. The fundamental difficulty of verification without enrolment data is well-established in the stylometry community \citep{koppel2004,stamatatos2009}.

The key structural asymmetry is that the \emph{alternative} $p_M$ is, in principle, \emph{known} as a detector software designer can query AI systems and characterise their output distribution, and define rules. Classical authorship verification faces a \emph{doubly composite} problem (unknown claimed author \emph{and} unknown alternative author). Our Results~1-3 exploit this known-alternative structure. In particular, Result~3 uses the fact that $p_M$ can be sampled to construct the subgroup-vs-AI classifier and to generate the AI-side corpus needed for the auditing protocol. This asymmetry does not make detection \emph{easy}, the composite null still governs the FPR trade-off, but it enables a specific form of auditing we propose.

%=============================================================================
\section{Methods}
\label{sec:setup}
%=============================================================================

\subsection{Setting the University Detection Problem}

A university receives a single document $x$ from student $i$ for assessment task $\tau$. The assessor wishes to determine whether student~$i$ wrote $x$ or whether it was substantially generated by an AI system. Each student's writing is characterised by an unknown parameter $\theta_i \in \Theta$ encoding writing ability, domain knowledge, style, effort, and linguistic background. The student generates documents from distribution $p_{\theta_i,\tau}$, while a machine generated document generates from $p_{M,\tau}$. All bounds in this paper are per-task, they apply to a fixed task $\tau$ and its associated distributions. We suppress $\tau$ for notational simplicity except where task dependence is the focus (Section~\ref{sec:task}).

\subsection{ Composite Null}

The null hypothesis is
\[
H_0 : x \sim p_\theta \quad \text{for some unknown } \theta \in \Theta.
\]
A \emph{detector} is a measurable function $\phi: \mathcal{X} \to [0,1]$. For a document $x$, $\phi(x)$ is the probability with which the detector declares $x$ to be AI-generated. We define the per-student false positive rate (FPR)
$\alpha(\theta;\phi) = \mathbb{E}_{p_\theta}[\phi(x)]$, and the detector power (true positive rate / TPR) is defined as  $\beta(\phi) = \mathbb{E}_{p_M}[\phi(x)]$. Let $\pi$ be a measure of probability on $\Theta$, and for any measurable subset $S\subseteq\Theta$, $\pi(S)$ is the fraction of students whose type $\theta$ falls in $S$. 

We define the overlap set for some TV-distance threshold $\delta$, $\Theta^*(\delta) = \{\theta \in \Theta : \mathrm{TV}(p_\theta, p_M) \leq \delta\}$, as the set of students whose writing is within TV-distance $\delta$ of AI output. All overlap quantities are generally considered task-specific, such that $\Theta^*$, $\delta^*$, and subgroup mixture TV distances depend on $\tau$. Figure~\ref{fig:schematic} illustrates the structure of the composite null and how it differs from the simple-vs-simple hypothesis setting generally assumed in prior work.

\begin{figure}[h]
\centering
    \includegraphics[width=0.71\linewidth]{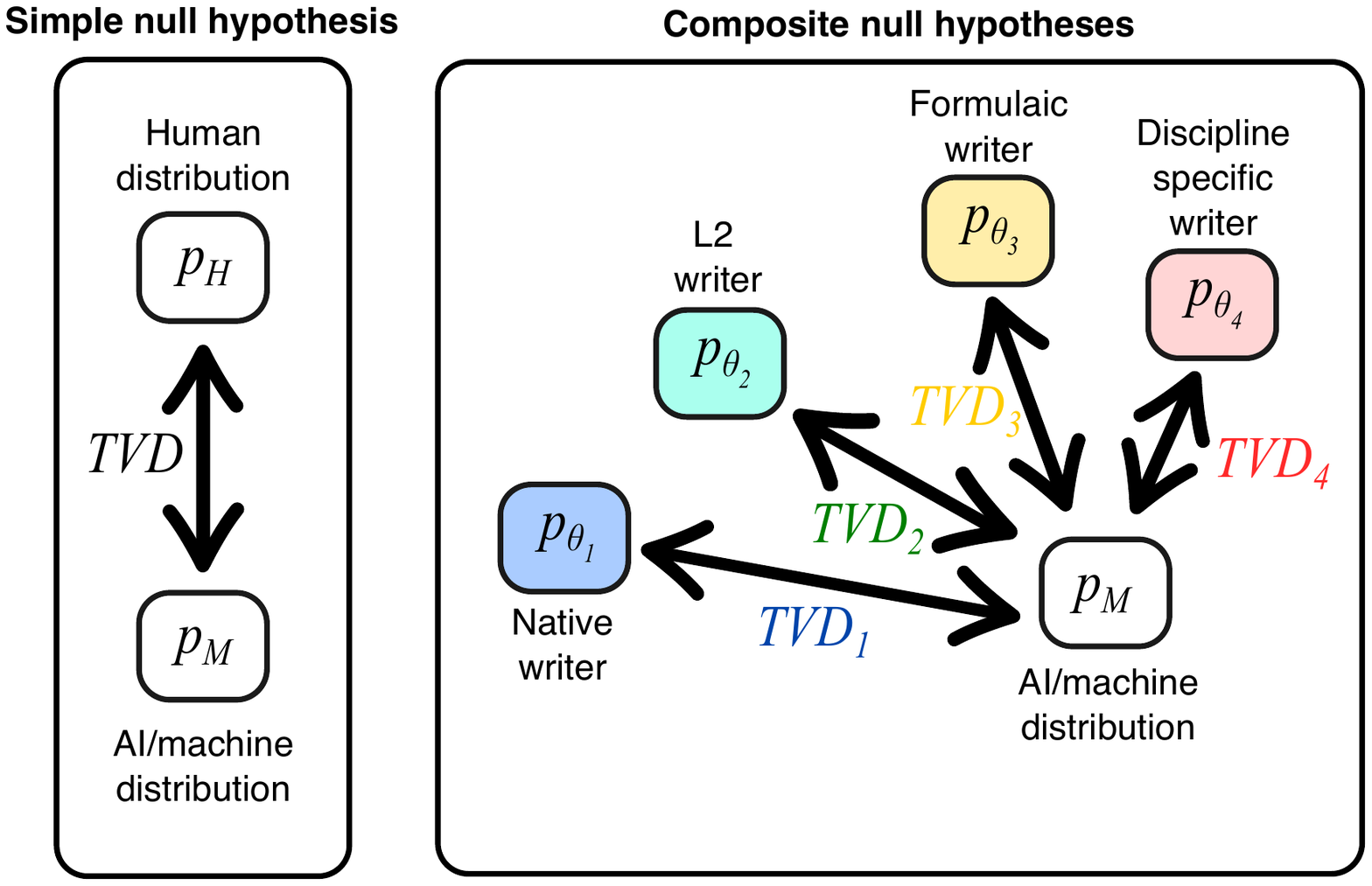}
\caption{Schematic comparison of the testing structure. \textbf{Left:} prior theoretical work assumes a single known human distribution $p_H$ and single known AI distribution $p_M$. \textbf{Right:} the university setting involves a \emph{composite null}---each student has a different writing distribution $p_\theta$, and the detector does not know which $\theta$ applies. Some students (e.g. those where English is a second language, or students writing a discipline specific assessment that uses structured norms) may have distributions very close to $p_M$ in total variation distance.}
\label{fig:schematic}
\end{figure}

%=============================================================================
\subsection{Theoretical Results Any Detector Must Face}
\label{sec:results}
%=============================================================================

We now summarise the three main mathematical results. Each is a consequence of the variational inequality~\eqref{eq:variational} applied to the composite null structure. Formal proofs are given in Appendix~\ref{app:proofs} for interested readers.

\subsubsection{An Average-Case Trade-Off}

\begin{tcolorbox}[colback=blue!5!white, colframe=blue!50!black, title=Result 1: Average-case size--power trade-off]
Suppose a fraction $\pi(\Theta^*)$ of students have writing distributions within TV-distance $\delta$ of the AI output. Then \textbf{any} detector with power (true positive rate) $\beta_0$ incurs a population-averaged false positive rate, $\bar{\alpha}$, of at least
\[
\bar{\alpha} \geq \pi(\Theta^*) \cdot (\beta_0 - \delta).
\]
\end{tcolorbox}

To demonstrate the meaning of this result, we present a conceptual example. Consider a case where 10\% of students write in a manner that is close to AI output, indicated by a nominal TV distance of 0.05 from expected AI distribution  ($\pi(\Theta^*) = 0.10$, $\delta = 0.05$), then a detector with 80\% power ($\beta_0=0.8$) must produce, on average, at least $\bar{\alpha}\geq 0.10 \times (0.80 - 0.05) \geq 0.075$ false accusations \emph{per student tested}. At an institution with 10,000 students, this is may be approximately 750 false accusations, and not because the detector is poorly designed, but because of the mathematical structure of the problem. 

We note these parameters are illustrative for this example in order to reproduce FPR on the order of 10\%-20\% demonstrated empirically in literature. The actual values of $\pi(\Theta^*)$ and $\delta$ are empirically unknown, and vary across institutions and disciplines, and cannot be estimated from the theory alone. We emphasise our example demonstrates the \emph{form} of the trade-off, not a quantitative prediction tool to use at a specific institution, and that this result is best understood as an interpretative lens through which we can explain empirical results and feedback from higher education communities reporting high FPR cases.

To help visualise the conceptual trade off, and sharp activation of non-negligible FPR values, Figure~\ref{fig:bound_map} displays the lower bound from Result~1 across the parameter space. The bound exceeds typical institutional tolerance thresholds of $1$--$5\%$ once $\pi(\Theta^*)$ surpasses approximately $5\%$ and $\delta$ is moderate. While neither $\pi(\Theta^*)$ nor $\delta$ has been directly measured in these distributional terms, the empirical literature constrains them to a regime where the bound is informative. For the overlap fraction $\pi(\Theta^*)$: international students constitute $20$--$33\%$ of enrolments at diverse institutions~\citep[e.g.,][]{HESA2025, AusDeptEd2024}, and \citet{liang2023} found that seven widely-used detectors misclassified $61.3\%$ of TOEFL essays by non-native English speakers as AI-generated, implying that a substantial share of L2 writers produce text that existing detectors cannot distinguish from AI output. Even conservatively, this places the effective overlap fraction at diverse universities in the range $5$--$20\%$. For the TV threshold $\delta$, \citet{sadasivan2023} estimated aggregate human-vs-AI total variation distances of roughly $0.2$--$0.4$ for $300$-token passages using GPT-3 models, with smaller values for more capable models and more constrained domains. Since the aggregate TV is an upper bound on the per-student TV for students within the overlap set, individual $\delta$ values below $0.1$ are plausible for L2 writers on constrained assessment tasks. The illustrative example ($\pi(\Theta^*) = 0.10$, $\delta = 0.05$) thus sits within an empirically plausible regime, though we emphasise these are consistency arguments rather than direct estimates. The key observation from the figure is structural: the bound becomes non-negligible across a broad and empirically realistic region of the parameter space, not merely at extreme or contrived parameter values.

 % This bound requires $\beta_0 > \delta$ to be nontrivial: a detector with no power ($\phi \equiv 0$) trivially has zero false positives. The result bounds the \emph{joint} achievable region, not either quantity alone. It is a trade-off bound, not an unconditional impossibility result.

 \begin{figure}[h]
\centering
    \includegraphics[width=0.71\linewidth]{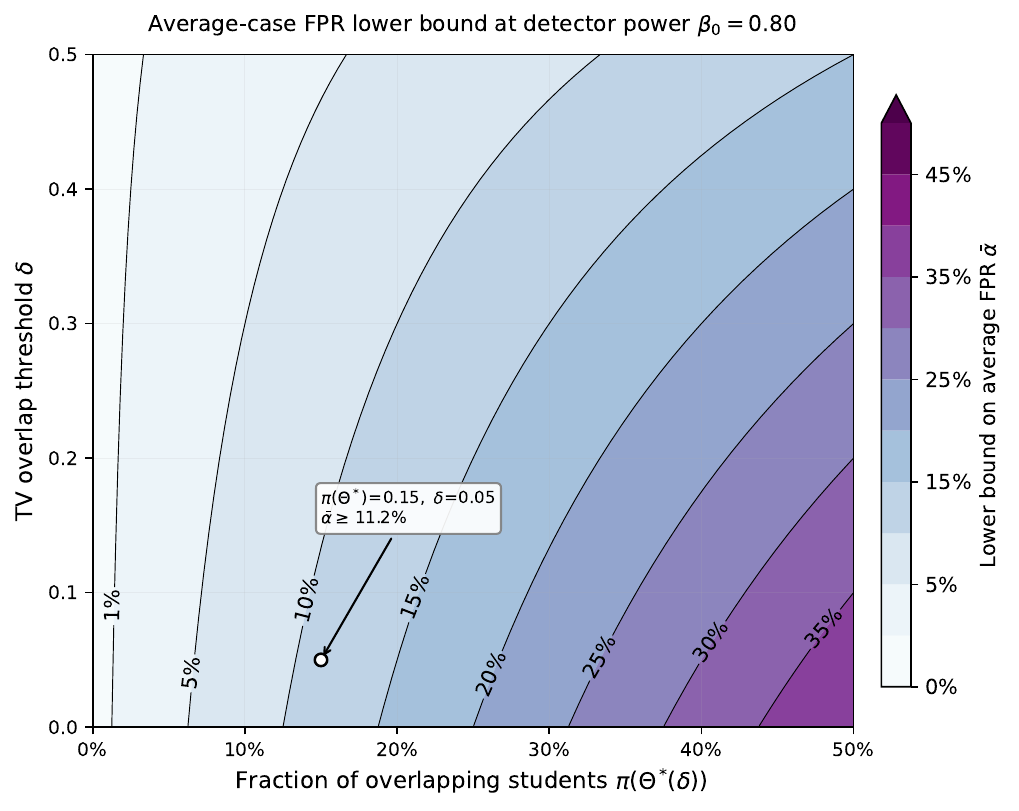}
\caption{The lower bound on population-averaged FPR from Result 1 at detector power $\beta_0 = 0.80$, shown across the $(\pi(\Theta^*(\delta)), \delta)$
 parameter space. Note that $\pi(\Theta^*(\delta))$
is non-decreasing in $\delta$ by construction, so a given student population traces a monotone curve through this space rather than occupying it freely. The figure shows the bound value at each mathematically valid parameter pair; the \textit{realised} pair for any specific population and task is an empirical question. The illustrative example from Section 3.3.1 is marked.}
\label{fig:bound_map}
\end{figure}

\subsubsection{Worst-Case Protection}

\begin{tcolorbox}[colback=blue!5!white, colframe=blue!50!black, title=Result 2: Worst-case size--power bound]
If the university requires that \textbf{no individual student} face a false positive rate exceeding $\alpha_0$, then the detector's power is bounded
\[
\beta(\phi) \;\leq\; \alpha_0 + \delta^*,
\]
where $\delta^* = \inf_\theta \mathrm{TV}(p_\theta, p_M)$ is the TV distance between the AI output and the \textbf{nearest} human writer.
\end{tcolorbox}

As we did before, let us present another conceptual example to demonstrate the meaning of the above result. If the nearest human writer in a student cohort is at TV distance $\delta^* = 0.05$ from the AI generated text model, and the university demands a false positive rate below 1\% ($\alpha_0 = 0.01$) for every student, then the detector's power cannot exceed $0.01 + 0.05 = 0.06$; it catches at most 6\% of AI-generated submissions. This is not a limitation of any particular detector, it is a mathematical consequence of the problem nature.

We wish to note that Result~2 is a necessary condition on any valid test, and not a minimax theorem in the sense of \citet{huber1965} and \citet{huber1973}. A minimax result would additionally construct least-favourable distributions and prove a saddle-point exists \citep[see][]{fauss2021}. Our bound gives an upper limit on achievable power under a size constraint but does not identify optimal tests.

\subsubsection{Subgroup Mixture Bound}

The overlap quantities $\pi(\Theta^*)$ and $\delta^*$ involve TV distances between individual-student distributions and the AI generated text model, these are not directly observable. Result~3 connects the interpretable, conceptual theory to \emph{observable} subgroup-level quantities.

Suppose students are partitioned into demographic or disciplinary subgroups $G_1, \ldots, G_K$ (e.g., by first language, degree programme, year of study). The \emph{subgroup mixture distribution} $\bar{p}_{G_k}$ is the distribution of documents produced by a randomly-chosen student from subgroup $G_k$, estimable from pooled submissions.

\begin{tcolorbox}[colback=blue!5!white, colframe=blue!50!black, title=Result 3: Subgroup mixture bound]
For any detector $\phi$ and any subgroup $G_k$, the subgroup-average false positive rate satisfies
\[
\bar{\alpha}_{G_k}(\phi) \;\geq\; \beta(\phi) - \,\mathrm{TV}(\bar{p}_{G_k},\, p_M).
\]
\end{tcolorbox}

If a subgroup's pooled writing distribution is close to AI output (small $\mathrm{TV}(\bar{p}_{G_k}, p_M)$), then \emph{any} detector with useful power must produce high false positive rates for that subgroup. This provides a theoretical explanation for the empirical findings of \citet{liang2023}, who documented elevated false positive rates for non-native English speakers, and \citet{hadra2026}, who found similar patterns for EFL learners.

The key mathematical step is the observation that the expected detector response under the subgroup mixture \emph{equals} the subgroup-average FPR. Aggregating across subgroups yields an institution-wide bound
\[
\text{Institution-wide average FPR} \;\geq\; \sum_{k=1}^K \pi_k \cdot \big(\beta(\phi) - \,\mathrm{TV}(\bar{p}_{G_k}, p_M)\big),
\]
where $\pi_k$ is the subgroup fraction of the student subgroup $G_k$.

\section{Discussion}
%=============================================================================
\subsection{Two Distinct Sources of Detection Difficulty}
\label{sec:twosources}
%=============================================================================

A key concept we describe in this paper is the separation of two logically independent mechanisms that make AI detection difficult, each with different policy implications.

Source 1 is AI quality convergence, where as LLMs improve, the statistical distance between AI output and human writing shrinks for \emph{all} human writers simultaneously. This is the mechanism analysed by \citet{sadasivan2023}, where intervention levers include degrading AI output quality, watermarking, or requiring AI providers to embed detectable signals.

Source 2 is population diversity discussed in this paper. Even with a \emph{fixed, non-improving} AI model, if the student population is diverse enough that some students' writing overlaps with AI output, any text-only, one-shot detector faces a fundamental size-power trade-off. Institutional intervention strategies include assessment redesign (open-ended tasks, personal reflection), process-based assessment (drafts, oral defence), building student writing profiles from multiple submissions.

These sources are complementary and independent. Crucially, Source~2 cannot be addressed by building a better detector: no text-only, one-shot detector can escape the size-power trade-off whenever distributional overlap exists. This is a consequence of the composite null, not a deficiency of any particular detector. Addressing Source~2 requires changing the \emph{assessment protocol}, i.e. collecting richer evidence than a single submitted document.

%=============================================================================
\subsection{Connecting Theory to Practice: Estimation and Auditing}
\label{sec:empirics}
%=============================================================================

Result~3 is useful insofar as the mixture TV distance $\mathrm{TV}(\bar{p}_{G_k}, p_M)$ can be related to observable quantities. We now discuss both the promise and the substantial difficulties of this step, with the aim of honest guidance for institutional practice.

A natural approach is to train a classifier to distinguish pooled subgroup documents from AI documents. If such a classifier achieves accuracy $\hat{a}$ under balanced classes, the Bayes-optimal accuracy satisfies $a^* = \frac{1}{2}(1 + \mathrm{TV})$, so
\[
\mathrm{TV}(\bar{p}_{G_k}, p_M) \;\geq\; 2\hat{a} - 1.
\]
This is a \emph{lower bound} on TV. However, for Result~3 to produce an informative lower bound on subgroup FPR, we need TV to be \emph{small} - that is, we need an \emph{upper bound} on TV, which goes in the opposite direction.

The implication is if a classifier struggles to distinguish a subgroup from AI output (low accuracy), this is \emph{suggestive} evidence that TV is small, but it is not proof - the classifier may simply be suboptimal. Specifically, failure of a particular classifier to exceed 50\% accuracy is not evidence that TV is small unless one can argue the classifier is near Bayes-optimal for the distributions in question. Conversely, if a classifier easily distinguishes the subgroup from AI, TV is large and the bound is vacuous for that subgroup, which is itself informative as it identifies subgroups where detection may be feasible.

%=============================================================================
\subsection{Task Dependence}
\label{sec:task}
%=============================================================================

All bounds depend on the assessment task $\tau$. The overlap parameters $\delta^*_\tau$, $\pi(\Theta^*_\tau)$, and $\mathrm{TV}(\bar{p}_{G_k,\tau}, p_{M,\tau})$ may vary substantially across tasks such as constrained-format tasks, for example ``write a 500-word summary of this article'', which reduce variability among student responses while simultaneously constraining AI output. Both effects increase the overlap between human and AI distributions, making detection structurally harder.

Open-ended or reflective tasks, for example ``discuss how your personal experience relates to this theory'', increase stylistic variability across students but also increase the effective TV distance between most students and the AI, making detection structurally easier for most (though not necessarily all) students.

The practical implication is that the \emph{same} detector may be structurally adequate for some task types and structurally inadequate for others, even within a single course. This reinforces an auditing protocol recommendation to evaluate detection reliability per-task group, not via a single institutional threshold.

%=============================================================================
\subsection{Limitations}
\label{sec:limitations}
%=============================================================================
The results we present are useful as an interpretive lens and help us understand and explain the outcomes observed from empirical research studies, as well as anecdotal and community observations reported throughout the higher education landscape. The assumptions invoked to simplify the AI detection problem, to allow probing statistical analysis and consideration are justified in the limit of some of the simplest ``black-box'' AI text detectors being implemented by higher education institutions globally. This being said, further improvements to the methodology and relaxation of certain assumptions may be possible in future iterations of analysis to extend the regime of validity to further detector scenarios. Below we briefly summarise some potential limitations or improvements which should be considered in future work.

\begin{enumerate}

\item \textbf{Estimation difficulty.} Result~3 provides a bound in terms of $\mathrm{TV}(\bar{p}_{G_k}, p_M)$. Estimating this quantity rigorously requires either near-Bayes-optimal classifiers or density estimation; classification accuracy yields only a lower bound on TV, which goes in the wrong direction for making the FPR bound informative (Section~\ref{sec:empirics}). We therefore recommend direct FPR auditing (Section~\ref{sec:audit}) rather than TV estimation. The difficulty of establishing ground truth for detection classification is well recognised in the educational measurement literature on test fraud~\citep{Meng2023}, where the absence of verified labels constrains the evaluation of both false positive and false negative rates—a challenge that the composite null framework formalises for AI text detection.

\item \textbf{Single-document model.} We treat each submission as a single draw from $p_\theta$, ignoring internal document structure. This is a reasonable assumption for a ``black-box'' one-shot, text-based AI detector commonly used in universities presently. Naturally other detector technologies are available, which can for example make decisions on more advanced document features, or longitudinal student data, though these are out of scope of this current work. 

\item \textbf{Fixed AI alternative.} Moreover, the boundary between ``AI-generated'' and ``AI-assisted human writing'' is not sharp as a student who uses an AI system for brainstorming or structural
suggestions before writing independently occupies a position on a continuum rather than falling cleanly under either hypothesis. Our binary formulation (human- written vs. AI-generated) is a simplification that mirrors the binary decisions some universities currently make, but the composite alternative problem suggests that the detection problem may be partially ill-posed at its boundaries.

% We model a single $p_M$. In practice, universities face a composite alternative (multiple AI models, prompting strategies, human-AI collaboration), which makes detection strictly harder than our bounds indicate.

\item \textbf{Scope of ``structural barrier'' claim.} What we have shown is that no text-only, one-shot detector can escape the size-power trade-off whenever distributional overlap exists between the student population and AI output. Alternative institutional designs that change the information structure - such as collecting multiple samples per student or building writing profiles, undertaking task redesign, or requiring process evidence - may reduce the effective overlap. Our results do not speak to these alternative protocols; they speak only to what is achievable from a single submitted document, in the one-shot, text-only environment overwhelmingly used currently in many higher education settings.
\end{enumerate}

\subsection{A Practical Path: Direct FPR Auditing}
\label{sec:audit}

In practice, the most operationally useful approach \emph{bypasses TV estimation entirely}. If a university applies a detector to a corpus of \emph{known human-written} documents from subgroup $G_k$ and observes a high false positive rate, this directly confirms that $\bar{\alpha}_{G_k}$ is large for that detector, without requiring estimation of the underlying TV distance. This is precisely the approach taken by \citet{liang2023}, \citet{weberwulff2023}, \citet{elkhatat2023}, and \citet{hadra2026}, and Result~3 provides the theoretical explanation: the observed high FPR is a \emph{necessary} consequence of the size-power trade-off when the subgroup mixture is close to $p_M$. A stratified auditing approach is consistent with established evaluation frameworks long established for automated scoring in educational measurement, which require evidence of subgroup-invariant performance before operational deployment~\citep{Williamson2012}.

\begin{tcolorbox}[colback=green!5!white, colframe=green!50!black, title=Proposed Institutional Auditing Protocol, fonttitle=\bfseries]
Before deploying any AI text detector, an institution should:

\begin{enumerate}
\item \textbf{Assemble stratified human-written corpora.} Collect confirmed human-written documents, stratified by observable subgroups that represent common student populations at the institution likely to exhibit different overlap with AI output: first language (L1 vs L2 speakers), diverse degree programmes, year of study, assessment task type, and so on.

\item \textbf{Run the detector on each stratum.} Apply the candidate detector to each subgroup corpus and record the false positive rate $\hat{\bar{\alpha}}_{G_k}$ for each subgroup $G_k$.

\item \textbf{Report stratified FPR alongside power.} For each subgroup, report the observed FPR \emph{together with} the detector's estimated power (from a parallel corpus of AI-generated text for the same task type). Result~3 predicts that subgroups with higher overlap will typically be expected to show higher FPR; if the observed trade-off is severe, the detector is structurally inadequate for that student subgroup/task combination.

\item \textbf{Evaluate per-task.} Repeat steps 1--3 for each assessment task type. Constrained tasks (e.g., ``summarise this article in 500 words'') are likely to produce higher overlap than open-ended or personal reflective tasks (Section~\ref{sec:task}).

\item \textbf{Set deployment thresholds per-stratum, not institution-wide.} If a detector passes audit for some subgroup/task combinations but not others, restrict deployment accordingly. Do not use a single institutional detection threshold which unfairly biases against sizeable student sub-populations.
\end{enumerate}

\medskip
This protocol tests the empirical prediction of Result~3 without requiring estimation of abstract TV distances (Section~\ref{sec:empirics}). It also generates the per-subgroup evidence base needed for equity-informed policy, as recommended by \citet{hadra2026} and \citet{liang2023}.
\end{tcolorbox}

%=============================================================================
\subsection{Implications for Policy and Practice}
\label{sec:policy}
%=============================================================================

Our results underscore several existing ideas emerging from research and in-practice empirical use of AI detection technology. We now highlight several concrete implications from our current work on how universities should approach AI text detection.

Despite the convenience and widespread adoption due to integrity concerns \citep{LuoJess2024}, AI detection scores should not serve as sole evidence. Results~1 and~2 establish that any text-only, one-shot detector with useful detection power will necessarily produce false accusations among students whose writing overlaps with AI output. Using detection scores as sole or primary evidence in misconduct proceedings therefore exposes institutions to a mathematically unavoidable risk of false findings. This provides a rigorous foundation for the policy recommendations already articulated on empirical grounds by \citet{weberwulff2023}, \citet{elkhatat2023}, and \citet{hadra2026}.

Population diversity is a constraint that is independent of, and cannot be addressed by improvements in, detector quality. For institutions with diverse student populations such as L2 English speakers, students from different disciplinary traditions, or students with varying levels of writing experience a structural barrier from population diversity may dominate the barrier from AI quality convergence. Investing in better detectors cannot remove this structural barrier within the text-only, one-shot setting. It is worth considering that investing in better assessment design may be beneficial in light of this.

Stratified auditing should be standard practice. Before deploying any detector, institutions should conduct the stratified FPR audit described in Section~\ref{sec:audit}. This is both feasible with existing resources and directly informative about whether the detector's error profile is acceptable for expected common subgroups and task types.

Assessment redesign is a durable solution, and may have added value and benefit of assuring validity of intended assessment goals \citep{Dawson2024, Corbin2025}. Given population diversity is a property of the composite null rather than of any detector, an effective institutional response is to change the information structure: process-based assessment, oral examination components, reflective tasks that draw on personal experience and developed understanding, and building longitudinal writing profiles that transform the composite null into a (partially) known null.

%=============================================================================
\section{Conclusion}
\label{sec:conclusion}
%=============================================================================

We have formalised the university ``black-box'', one-shot text-based AI detection problem as a composite hypothesis test and frame three quantitative results from fundamental mathematical statistics: average-case and worst-case trade-off bounds (Results~1-2), and a subgroup mixture bound (Result~3) that connects the theoretical framework to observable subgroup-level quantities. The central insight is that population diversity creates a structural barrier to text-only, one-shot AI detection that is logically independent of AI model quality: no text-only, one-shot detector can escape the size--power trade-off whenever distributional overlap exists between the student population and AI output.

This structural barrier provides a theoretical explanation for the patterns of unreliability and bias documented in the growing empirical literature on AI detection tools \citep{weberwulff2023,liang2023,elkhatat2023,hadra2026}. It also clarifies the scope of possible improvements, where better detector engineering cannot remove a barrier that arises from the diversity of the human population being tested. The durable institutional response is assessment redesign to change the information structure so that decisions need not rest on a single submitted document alone.

\bibliographystyle{apalike-ejor}
\bibliography{references}

%=============================================================================
% APPENDIX: Formal Statements and Proofs
%=============================================================================
\appendix
\section{Appendix: Formal Statements and Proofs}
\label{app:proofs}

For completeness, we state the theorems formally and provide proofs. All proofs use only the variational inequality~\eqref{eq:variational}.

\begin{theorem}[Average-case trade-off]
\label{thm:average}
Let $\{p_\theta\}_{\theta \in \Theta}$ be a family of distributions, $p_M$ a fixed distribution, and $\pi$ a probability measure on $\Theta$. For $\delta > 0$, let $\Theta^* = \{\theta : \mathrm{TV}(p_\theta, p_M) \leq \delta\}$ with $\pi(\Theta^*) > 0$. Then for any $\phi: \mathcal{X} \to [0,1]$:
\begin{equation}
\int_\Theta \alpha(\theta; \phi)\,d\pi(\theta) \;\geq\; \pi(\Theta^*) \cdot \big(\beta(\phi) - \delta\big).
\end{equation}
\end{theorem}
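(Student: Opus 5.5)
The plan is to apply the variational inequality~\eqref{eq:variational} pointwise in $\theta$ and then integrate against $\pi$, keeping only the contribution from the overlap set $\Theta^*$.

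First, fix $\theta \in \Theta^*$. Applying~\eqref{eq:variational} with $P = p_M$ and $Q = p_\theta$ gives $\beta(\phi) - \alpha(\theta;\phi) \le \mathrm{TV}(p_\theta, p_M) \le \delta$. Hence $\alpha(\theta;\phi) \ge \beta(\phi) - \delta$ for every $\theta \in \Theta^*$. For $\theta \notin \Theta^*$ we use only the trivial bound $\alpha(\theta;\phi) \ge 0$, which holds because $\phi$ takes values in $[0,1]$.

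Second, split the integral over $\Theta$ into the parts over $\Theta^*$ and over its complement. Drop the nonnegative complement term, and bound the integrand on $\Theta^*$ below by the constant $\beta(\phi) - \delta$:
\begin{equation*}
\int_\Theta \alpha(\theta;\phi)\,d\pi(\theta) \;\ge\; \int_{\Theta^*} \alpha(\theta;\phi)\,d\pi(\theta) \;\ge\; \pi(\Theta^*)\,\big(\beta(\phi)-\delta\big).
\end{equation*}
If $\beta(\phi) < \delta$, the right-hand side is negative and the claim holds trivially. The argument above covers this case anyway, since the pointwise bound does not depend on the sign of $\beta(\phi)-\delta$.

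The only step needing care, and the main obstacle such as it is, is measurability. The function $\theta \mapsto \alpha(\theta;\phi)$ must be $\pi$-measurable for the integral to make sense, and $\Theta^*$ must be a measurable set so that $\pi(\Theta^*)$ is defined. I would handle this by assuming $\{p_\theta\}$ is a Markov kernel, i.e.\ $\theta \mapsto p_\theta(A)$ is measurable for each event $A$. A standard monotone-class argument then shows $\theta \mapsto \mathbb{E}_{p_\theta}[\phi]$ is measurable for bounded measurable $\phi$. For $\Theta^*$, I would take measurability as implicit in the hypothesis $\pi(\Theta^*)>0$. Alternatively, one can replace $\pi$ by its inner measure, since the chain of inequalities only uses a lower bound on the integral over $\Theta^*$.
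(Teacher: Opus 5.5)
Your proof is correct and is essentially the paper's argument: apply the variational inequality pointwise for $\theta \in \Theta^*$ to get $\alpha(\theta;\phi) \ge \beta(\phi) - \delta$, then integrate over $\Theta$ and restrict to $\Theta^*$. You also spell out two things the paper leaves implicit: that $\alpha \ge 0$ justifies dropping the complement of $\Theta^*$, and the measurability hypotheses.
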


\begin{proof}
For $\theta \in \Theta^*$, by~\eqref{eq:variational}: $\alpha(\theta;\phi) = \mathbb{E}_{p_\theta}[\phi] \geq \mathbb{E}_{p_M}[\phi] - \,\mathrm{TV}(p_\theta, p_M) \geq \beta(\phi) - \delta$. Integrating over $\Theta$ and restricting to $\Theta^*$:
$\int_\Theta \alpha\,d\pi \geq \int_{\Theta^*}(\beta - \delta)\,d\pi = \pi(\Theta^*)(\beta - \delta)$.
\end{proof}

\begin{theorem}[Worst-case size--power bound]
\label{thm:worstcase}
Under the same setup, if $\sup_\theta \alpha(\theta;\phi) \leq \alpha_0$, then:
\begin{equation}
\beta(\phi) \;\leq\; \alpha_0 + \delta^*, \quad \delta^* = \inf_\theta \mathrm{TV}(p_\theta, p_M).
\end{equation}
\end{theorem}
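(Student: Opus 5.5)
The argument applies the variational inequality~\eqref{eq:variational} to each student type separately and then optimises over $\theta$. Because $\delta^*$ is an infimum that need not be attained, the bound is obtained through an approximating sequence rather than a single nearest student.

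Fix an arbitrary $\theta \in \Theta$. Applying~\eqref{eq:variational} with $P = p_M$ and $Q = p_\theta$ gives
\[
\beta(\phi) = \mathbb{E}_{p_M}[\phi] \;\leq\; \mathbb{E}_{p_\theta}[\phi] + \mathrm{TV}(p_\theta, p_M) = \alpha(\theta;\phi) + \mathrm{TV}(p_\theta, p_M).
\]
The hypothesis $\sup_\theta \alpha(\theta;\phi) \leq \alpha_0$ gives $\alpha(\theta;\phi) \leq \alpha_0$ for this same $\theta$. Hence
\[
\beta(\phi) \leq \alpha_0 + \mathrm{TV}(p_\theta, p_M) \qquad \text{for every } \theta \in \Theta.
\]
The left-hand side does not depend on $\theta$, so we may take the infimum over $\theta$ on the right. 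This yields $\beta(\phi) \leq \alpha_0 + \inf_\theta \mathrm{TV}(p_\theta, p_M) = \alpha_0 + \delta^*$.

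To make the infimum step explicit without assuming attainment, choose $\theta_n$ with $\mathrm{TV}(p_{\theta_n}, p_M) \to \delta^*$. Each $\theta_n$ satisfies the per-$\theta$ bound above, and letting $n \to \infty$ gives the same conclusion.

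The only real subtlety is this infimum step: it goes through precisely because the size constraint is uniform in $\theta$. The proof also needs $\Theta$ to be nonempty so that $\delta^*$ is finite; otherwise the bound is vacuous. No measure $\pi$ is required, which is why this result needs fewer hypotheses than Theorem~\ref{thm:average}.
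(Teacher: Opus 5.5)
Your proof is correct and follows the paper's argument: apply the variational inequality to $(p_M, p_\theta)$, use the uniform size constraint $\alpha(\theta;\phi)\le\alpha_0$, and pass to the infimum over $\theta$. Your approach of proving the bound for every $\theta$ and then taking the infimum does the same job as the paper's ``let $\theta^*$ achieve (or approach) the infimum'' step, and it spells out the non-attainment case more explicitly.
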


\begin{proof}
Let $\theta^*$ achieve (or approach) the infimum. By~\eqref{eq:variational}: $\beta(\phi) - \alpha(\theta^*;\phi) \leq \delta^*$. Hence $\beta(\phi) \leq \alpha(\theta^*;\phi) + \delta^* \leq \alpha_0 + \delta^*$.
\end{proof}

\begin{theorem}[Subgroup mixture bound]
\label{thm:subgroup}
For subgroup $G_k$ with mixture $\bar{p}_{G_k} = \int_{G_k} p_\theta\,d\pi(\theta|G_k)$ and subgroup-average FPR $\bar{\alpha}_{G_k}(\phi) = \int_{G_k}\alpha(\theta;\phi)\,d\pi(\theta|G_k)$:
\begin{equation}
\bar{\alpha}_{G_k}(\phi) \;\geq\; \beta(\phi) - \,\mathrm{TV}(\bar{p}_{G_k}, p_M).
\end{equation}
\end{theorem}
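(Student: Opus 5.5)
The plan is to reduce Theorem~\ref{thm:subgroup} to a single application of the variational inequality~\eqref{eq:variational}, with $P = \bar{p}_{G_k}$ and $Q = p_M$. The key observation, already flagged in the main text, is that the detector's expected response under the subgroup mixture equals the subgroup-average FPR:
\[
\mathbb{E}_{\bar{p}_{G_k}}[\phi] = \bar{\alpha}_{G_k}(\phi).
\]
Once this identity is established, \eqref{eq:variational} gives $\mathbb{E}_{p_M}[\phi] - \mathbb{E}_{\bar{p}_{G_k}}[\phi] \le \mathrm{TV}(\bar{p}_{G_k}, p_M)$. Rearranging yields $\bar{\alpha}_{G_k}(\phi) \ge \beta(\phi) - \mathrm{TV}(\bar{p}_{G_k}, p_M)$, which is the claim.

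The first step is to make the mixture well defined. I will assume that $(\theta, x) \mapsto p_\theta(x)$ is jointly measurable and that $\pi(\cdot \mid G_k)$ is a probability measure on $G_k$; this requires $\pi(G_k) > 0$. Under these assumptions, $\bar{p}_{G_k}(x) = \int_{G_k} p_\theta(x)\, d\pi(\theta \mid G_k)$ is a nonnegative measurable function. By Tonelli's theorem it integrates to one against $\mu$, so it is a genuine density and $\mathrm{TV}(\bar{p}_{G_k}, p_M)$ makes sense.

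The second step is the identity itself. I write
\[
\mathbb{E}_{\bar{p}_{G_k}}[\phi] = \int_{\mathcal{X}} \phi(x) \int_{G_k} p_\theta(x)\, d\pi(\theta \mid G_k)\, d\mu(x),
\]
and then exchange the order of integration. The exchange is justified by Tonelli's theorem, because the integrand $\phi(x)\,p_\theta(x)$ is nonnegative, since $0 \le \phi \le 1$. This produces $\int_{G_k} \mathbb{E}_{p_\theta}[\phi]\, d\pi(\theta \mid G_k) = \bar{\alpha}_{G_k}(\phi)$.

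The final step applies \eqref{eq:variational} and rearranges as described above. I do not need \eqref{eq:variational} in its absolute-value form, only its one-sided consequence. An optional remark is that the bound is never weaker than averaging Theorem~\ref{thm:average}-style pointwise bounds. This follows from convexity of TV, $\mathrm{TV}(\bar{p}_{G_k}, p_M) \le \int_{G_k} \mathrm{TV}(p_\theta, p_M)\, d\pi(\theta \mid G_k)$, although this is not needed for the proof.

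The only step with any content is the Fubini/Tonelli interchange together with the measurability of $\theta \mapsto p_\theta(x)$. I expect this to be the main obstacle, and I will handle it by stating the joint measurability assumption explicitly rather than trying to derive it. For the institution-wide corollary, I will multiply the subgroup bound by $\pi_k = \pi(G_k)$ and sum over $k$. The law of total probability, applied to the partition $G_1, \ldots, G_K$, identifies $\sum_k \pi_k \bar{\alpha}_{G_k}(\phi)$ with $\int_\Theta \alpha(\theta;\phi)\, d\pi(\theta)$.
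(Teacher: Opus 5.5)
Your proof is correct and follows the paper's argument. The paper also identifies $\mathbb{E}_{\bar{p}_{G_k}}[\phi]$ with $\bar{\alpha}_{G_k}(\phi)$, writing ``by definition'' where you invoke Tonelli, joint measurability and $\pi(G_k)>0$, and then applies \eqref{eq:variational} to the pair $(\bar{p}_{G_k}, p_M)$ and rearranges.

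One caution concerns your optional aside only. Convexity shows that the mixture bound dominates the \emph{unclipped} average $\int_{G_k}\big(\beta(\phi)-\mathrm{TV}(p_\theta,p_M)\big)\,d\pi(\theta\mid G_k)$, but it does not dominate Theorem~\ref{thm:average}-style bounds, which clip at zero off $\Theta^*$. For example, take half of $G_k$ with $p_\theta=p_M$ and half singular to $p_M$: the clipped pointwise bounds average to $\beta(\phi)/2$, while the mixture bound gives only $\beta(\phi)-1/2$.
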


\begin{proof}
By definition: $\mathbb{E}_{\bar{p}_{G_k}}[\phi] = \int_{G_k}\mathbb{E}_{p_\theta}[\phi]\,d\pi(\theta|G_k) = \bar{\alpha}_{G_k}(\phi)$. Applying~\eqref{eq:variational} to $(\bar{p}_{G_k}, p_M)$: $\beta(\phi) - \bar{\alpha}_{G_k}(\phi) \leq \,\mathrm{TV}(\bar{p}_{G_k}, p_M)$. Rearranging gives the result.
\end{proof}

\begin{remark}[Convexity]
By joint convexity of TV: $\mathrm{TV}(\bar{p}_{G_k}, p_M) \leq \int_{G_k}\mathrm{TV}(p_\theta, p_M)\,d\pi(\theta|G_k)$. The mixture TV is a lower bound on the average per-student TV within the subgroup. A small mixture TV does not imply all students are individually close to the AI.
\end{remark}

\end{document}